%% file: main.tex
\documentclass[runningheads]{llncs}
\usepackage[T1]{fontenc}

\usepackage{graphicx}
\usepackage{amsmath, amssymb}
\usepackage{enumitem}
\usepackage{hyperref}
\usepackage{color}
\usepackage{comment}
\usepackage{wrapfig}

\begin{document}
\title{Physics of Information Geometry - Part I: Principle of Least Action on the Probability Simplex}
%
%
\author{C.~Emre Koksal\inst{1}
\and Deniz Sargun\inst{2}\thanks{This work does not relate to the author' s position at Amazon.}
\authorrunning{Koksal and Sargun}
\institute{The Ohio State University, Columbus OH 43210, USA,
\email{koksal.2@osu.edu} \and
Amazon.com Inc., Palo Alto CA 94301, USA, \email{denizsargun@gmail.com}}}
%
%
%

\maketitle
%
\input{0-abstract}
\keywords{Active inference \and Information geometry \and
Free energy principle \and Kullback--Leibler divergence \and
Probability simplex \and Least Action Principle \and Pythagorean theorem.}
\input{1-introduction-revised-technical}
\input{2-model-revised}
\input{3-main-insight-pythagorean}
\input{4-problem-statement-revised}
\input{5-approach-least-action-information-projection}
\input{6-numeric-example}
\input{7-geodesic-cost-variational-distribution}
\input{8-discussion-conclusion}


%
%
\bibliographystyle{splncs04}
\bibliography{bibliography}
%
%
\begin{appendix}
\input{9-proof_solution}
\end{appendix}

\end{document}

%% file: 0-abstract.tex
\begin{abstract}

We develop a least-action framework for describing how a probability distribution can evolve from an equilibrium state to a prescribed nonequilibrium state under constrained incremental changes.
Taking a Gibbs distribution as the equilibrium reference, the framework gives a direct physical meaning to the geometry of the probability simplex: distance from equilibrium corresponds to nonequilibrium free energy, while changes between successive distributions carry an informational kinetic cost. The Pythagorean structure of relative entropy then provides the central insight of the work. It shows that intermediate distributions chosen via sequential information projections can reduce the kinetic cost of large transitions and establishes an energy-conservation-like relation between the kinetic expenditure along a path and the free energy accumulated in reaching the target distribution. Motivated by this geometry, we construct a greedy least-action path through successive information projections, obtain a closed-form characterization of each projection through the Lambert \(W\) function, and establish a finite-step performance guarantee. We further show that state-dependent costs can be incorporated naturally by reshaping the underlying Gibbs reference, providing a thermodynamic interpretation of path penalties as modifications of the effective energy landscape. Together, these results provide a unified view of distributional evolution through least action, information geometry, and nonequilibrium thermodynamics.


\end{abstract}

%% file: 1-introduction-revised-technical.tex
\section{Introduction}

A broad class of problems in learning, inference, control, and nonequilibrium statistical mechanics can be viewed as motion on the probability simplex: a system begins at one distribution and reaches another through a sequence of admissible changes. In diffusion-based generative models \cite{Denoising}, for example, distributions evolve between a simple reference law and a structured target; related distributional dynamics arise in Bayesian inference, stochastic control, and active inference~\cite{friston2010free,friston2006free}. Beyond identifying the endpoint, a fundamental question is therefore how the geometry of the simplex determines the path between distributions when changes are constrained locally.

We develop a discrete variational description of such paths by combining the thermodynamic interpretation of relative entropy with information-geometric projection. Taking a Gibbs distribution \(q_0\) as the equilibrium reference, we use \(D_{\mathrm{KL}}(p\|q_0)\) as the dimensionless nonequilibrium free energy\footnote{For a nonequilibrium distribution \(p\), the relative entropy \(D_{\mathrm{KL}}(p\|q_0)\), when multiplied by the thermodynamic factor \(k_B T\), is precisely the excess nonequilibrium free energy relative to Gibbs equilibrium.} and KL divergence between successive distributions as an informational kinetic energy. The orientation of the divergences is consequential: it aligns the free-energy quantity with its statistical-mechanical meaning while simultaneously placing the local motion constraint in the form required by the information-geometric Pythagorean relation. This yields a \textbf{least-action construction} in which the geometry of KL divergence directly relates incremental motion on the simplex to changes in free energy. The same viewpoint also provides a natural way to incorporate additional state-dependent costs: when such a cost enters through its expectation under the variational distribution, it augments the free-energy landscape and can be absorbed into an effective Gibbs reference.

The central geometric argument is obtained by constructing the path backward from the preferred distribution \(p_{\mathrm{pref}}\) toward the Gibbs reference \(q_0\). Each point is obtained by an \textbf{information projection} onto a KL ball determined by the kinetic-energy budget of the corresponding step. The Pythagorean inequality then identifies the free-energy increment generated by the projection and lower bounds it directly by the KL cost of the step. \textit{Iterating these projections produces a discrete least-action path together with an energy-conservation-like relation across the trajectory.} The total free-energy increase from equilibrium to \(p_{\mathrm{pref}}\) is \(D_{\mathrm{KL}}(p_{\mathrm{pref}}\|q_0)\), while the Pythagorean decomposition accounts for this increase through the sequence of kinetic increments and projection residuals. This gives \(D_{\mathrm{KL}}(p_{\mathrm{pref}}\|q_0)\) a dual role: it specifies the free-energy separation between the endpoints and controls the total informational resource required to traverse the simplex. In particular, a fixed per-step kinetic budget yields a direct upper bound on the number of information projections required to connect the two distributions. The resulting relation provides a \textbf{discrete analogue of energy conversion}, with the informational kinetic budget along the path accounting for the free energy accumulated in reaching the nonequilibrium target.

The local projection problem is convex and can be characterized explicitly. The thermodynamically aligned ordering of the KL arguments leads to a nonlinear optimality condition whose closed-form solution is expressed through the Lambert \(W\) function. The Lambert-\(W\) update is therefore a direct consequence of the particular information-geometric and thermodynamic structure imposed by the least-action construction.

The Gibbs-referenced formulation also provides a natural interpretation of costs associated with intermediate distributions. \textit{When an expected state cost is added to the free-energy landscape, its effect can be absorbed into a modified Gibbs reference via an exponential tilt, rather than requiring a different geometric construction.} Physically, this means that additional penalties can be viewed as reshaping the underlying energy landscape: states carrying larger cost become exponentially less favorable under the effective equilibrium distribution. When the cost itself is expressed in physical energy units, the tilted distribution takes the usual Gibbs form associated with an effective Hamiltonian, driven by the cost. This connects application-specific path penalties directly to thermodynamic modifications of the underlying equilibrium model while preserving the same information-projection and least-action structure.

Our construction is related to several established geometric descriptions of distributional dynamics. Classical information geometry~\cite{amari2016information,rao1945information,cencov1982statistical} characterizes statistical manifolds through divergence-induced geometry and dual geodesic structures, while Schr\"odinger bridges and entropy-regularized optimal transport~\cite{leonard2014survey,benamou2000computational,villani2009optimal} formulate interpolation between endpoint distributions through dynamical or path-space optimization. Active inference~\cite{parr2022active,dacosta2020active,devries2025efe} provides another important example in which free-energy functionals organize belief and action dynamics. The present construction differs in its objective and KL orientation: the preferred distribution occupies the first argument of the relative entropy, making the equilibrium-referenced quantity directly coincide with the statistical-mechanical free-energy difference. Accordingly, our emphasis is not on variational inference per se, but on the information geometry and thermodynamics of paths between distributions.

The main contributions of this work are:
\begin{itemize}
   \item \textbf{A least-action principle for information geometry:}
    We introduce a discrete mechanics for motion on the probability
    simplex in which thermodynamic free energy defines the potential
    landscape and KL divergence quantifies the kinetic cost of changing
    distributions. Information projection and the Pythagorean geometry
    of relative entropy then determine a trajectory between equilibrium
    and a prescribed nonequilibrium state. The resulting formulation
    yields an energy-conservation-like relation across the path, an
    explicit bound on the number of admissible steps, and a closed-form
    characterization of each optimal projection through the Lambert
    \(W\) function. The same variational structure accommodates
    state-dependent costs through an exponential deformation of the
    Gibbs reference distribution.

    \item \textbf{A physical interpretation of distributional evolution:}
    The framework connects the geometry of statistical distributions to
    the thermodynamics of creating nonequilibrium structure: the
    free-energy separation \(D_{\mathrm{KL}}(p_{\mathrm{pref}}\|q_0)\)
    simultaneously characterizes the energetic value of the target
    distribution and the informational resource governing access to it.
    State-dependent penalties acquire an equally direct interpretation
    as modifications of the effective energy landscape and hence of the
    associated Gibbs equilibrium. This suggests a broader view of
    learning and inference as constrained motion through distribution
    space, in which both the thermodynamic separation of the endpoints
    and the energetic landscape encountered along the path shape the
    complexity of reaching a desired statistical state.
\end{itemize}

%% file: 2-model-revised.tex
\section{Model}
\label{sec:model}

We consider discrete motion on the probability simplex
\begin{equation}
    \Delta(\mathcal X)
    =
    \left\{
    q\in\mathbb{R}_{\geq 0}^{|\mathcal X|}
    :
    \sum_{x\in\mathcal X}q(x)=1
    \right\},
\end{equation}
where \(\mathcal X\) is a finite state space.  The distribution \(q_0\) denotes a
Gibbs equilibrium distribution and serves as the thermodynamic reference state.
A sequence \(\{q_t\}_{t\geq 0}\) describes a discrete trajectory on the simplex,
with \(p_{\mathrm{pref}}\) denoting the prescribed nonequilibrium distribution to
be reached.  Throughout, we use dimensionless energy quantities; multiplication
by \(k_B \tau \) restores physical energy units, where $k_B$ is the Boltzmann constant and $\tau$ is the system temperature.

We begin by defining the free-energy quantity associated with two points on the
simplex.

\begin{definition}[Relative free energy]
For \(p,q\in\Delta(\mathcal X)\), the relative free energy of \(p\) with respect
to \(q\), relative to the Gibbs reference \(q_0\), is
\begin{equation}
    F_r(p\|q)
    \triangleq
    D_{\mathrm{KL}}(p\|q_0)
    -
    D_{\mathrm{KL}}(q\|q_0).
    \label{eq:relative_free_energy}
\end{equation}
\end{definition}
This definition has a direct thermodynamic interpretation.  With \(q_0\) chosen
as the Gibbs distribution, \(D_{\mathrm{KL}}(q\|q_0)\) is the dimensionless
nonequilibrium free-energy excess of \(q\) above equilibrium.  Hence
\(F_r(p\|q)\) is the difference between the nonequilibrium free energies
associated with \(p\) and \(q\).  In particular,
\begin{align}
    F_r(p\|q_0)
    &=D_{\mathrm{KL}}(p\|q_0),                                      \label{eq:fr_from_equilibrium}\\
    F_r(p\|p)
    &=0,                                                            \label{eq:fr_same_state}\\
    F_r(q_0\|q)
    &=-D_{\mathrm{KL}}(q\|q_0),                                     \label{eq:fr_to_equilibrium}
\end{align}
and the antisymmetry
\begin{equation}
    F_r(p\|q)=-F_r(q\|p)
    \label{eq:fr_antisymmetry}
\end{equation}
follows immediately.  Thus, unlike KL divergence itself, relative free energy
is signed: it records whether the first distribution lies above or below the
second in the free-energy landscape defined by the Gibbs reference.

For the preferred distribution, this becomes
\begin{equation}
    F_r(p_{\mathrm{pref}}\|q)
    =
    D_{\mathrm{KL}}(p_{\mathrm{pref}}\|q_0)
    -
    D_{\mathrm{KL}}(q\|q_0).
    \label{eq:preferred_relative_free_energy}
\end{equation}
The first term is fixed by the endpoints.  Consequently, as \(q\) varies over
the simplex, the relative free energy with respect to \(p_{\mathrm{pref}}\) is
determined entirely by the nonequilibrium free energy stored at \(q\).

For consecutive points on a trajectory, we similarly write
\begin{equation}
    F_r(q_t\|q_{t-1})
    =
    D_{\mathrm{KL}}(q_t\|q_0)
    -
    D_{\mathrm{KL}}(q_{t-1}\|q_0).
    \label{eq:incremental_relative_free_energy}
\end{equation}
Equation~\eqref{eq:incremental_relative_free_energy} is the free-energy increment
generated by the transition \(q_{t-1}\rightarrow q_t\).  Summing these
increments along any discrete path produces the telescoping relation
\begin{equation}
    \sum_{t=1}^{T}F_r(q_t\|q_{t-1})
    =
    D_{\mathrm{KL}}(q_T\|q_0)
    -
    D_{\mathrm{KL}}(q_0\|q_0)
    =
    D_{\mathrm{KL}}(q_T\|q_0).
    \label{eq:free_energy_telescoping}
\end{equation}
Hence, if \(q_T=p_{\mathrm{pref}}\), the accumulated relative free energy along
the trajectory is exactly
\(D_{\mathrm{KL}}(p_{\mathrm{pref}}\|q_0)\), independent of the particular
discretization of the path.  This endpoint quantity will play a central role in
the geometric construction developed in the following sections.

We next associate an informational kinetic energy with motion between
successive distributions.

\begin{definition}[Kinetic energy]
The kinetic energy of the transition from \(q_{t-1}\) to \(q_t\) is
\begin{equation}
    K_t
    \triangleq
    D_{\mathrm{KL}}(q_t\|q_{t-1})
    =
    \sum_{x\in\mathcal X}
    q_t(x)
    \log\frac{q_t(x)}{q_{t-1}(x)}.
    \label{eq:kinetic_energy}
\end{equation}
\end{definition}
The distinction between the two energy quantities is structural.
Relative free energy measures displacement in the thermodynamic landscape
relative to the fixed Gibbs state \(q_0\), whereas \(K_t\) measures the local
cost of the transition itself.  In the infinitesimal limit, KL divergence
induces the Fisher--Rao metric, so that
\begin{equation}
    D_{\mathrm{KL}}(q_t\|q_{t-1})
    =
    \frac{1}{2}
    \|q_t-q_{t-1}\|_{F,q_{t-1}}^2
    +
    o\!\left(\|q_t-q_{t-1}\|^2\right),
    \label{eq:kl_fisher_local}
\end{equation}
providing the usual quadratic dependence on an infinitesimal displacement that
motivates its interpretation as kinetic energy on the statistical manifold.

The two definitions therefore separate the thermodynamics of \emph{where the
system is} from the information-geometric cost of \emph{how it moves}: the
Gibbs-referenced KL divergence assigns free energy to points on the simplex,
while the KL divergence between consecutive points assigns kinetic energy to
steps along a path.  The next section shows that information projection links
these quantities directly through the Pythagorean geometry of relative entropy.

%% file: 3-main-insight-pythagorean.tex
\section{Main Insight: Pythagorean Geometry of Free-Energy Conversion}
\label{sec:pythagorean}

The definitions in Section~\ref{sec:model} assign two distinct energetic
quantities to a trajectory on the probability simplex: the Gibbs-referenced
relative free energy associated with its points and the KL-based kinetic energy
associated with its steps. We now show that these quantities are linked
directly by the Pythagorean geometry of relative entropy. This relation is the
main geometric insight underlying the path construction developed in the
remainder of the paper.

The construction is most naturally viewed backward. Starting from a distribution
\(q_t\) (with the ultimate destination \(q_T=p_{\mathrm{pref}}\)), we construct its predecessor
toward the Gibbs equilibrium \(q_0\). Define
\begin{equation}
\mathcal B_\delta(q_t)
=
\left\{q\in\Delta(\mathcal X):
D_{\mathrm{KL}}(q_t\|q)\leq\delta\right\}.
\label{eq:backward_kl_ball}
\end{equation}
The predecessor is the information projection toward \(q_0\),
\begin{equation}
q_{t-1}^{\star}
=
\underset{q\in\mathcal B_\delta(q_t)}{\text{argmin}} \ ~
D_{\mathrm{KL}}(q\|q_0).
\label{eq:backward_information_projection}
\end{equation}

Figure~\ref{fig:pythagorean-simplex} illustrates the geometry. The KL ball is
centered at \(q_t\) in the divergence sense
\(D_{\mathrm{KL}}(q_t\|q)\leq\delta\), and \(q_{t-1}^{\star}\) is selected by
projection toward \(q_0\). Read forward, the radius
\(D_{\mathrm{KL}}(q_t\|q_{t-1}^{\star})\) is exactly the kinetic energy of the
step.

\begin{figure}[t]
\centering
\includegraphics[width=0.75\columnwidth]{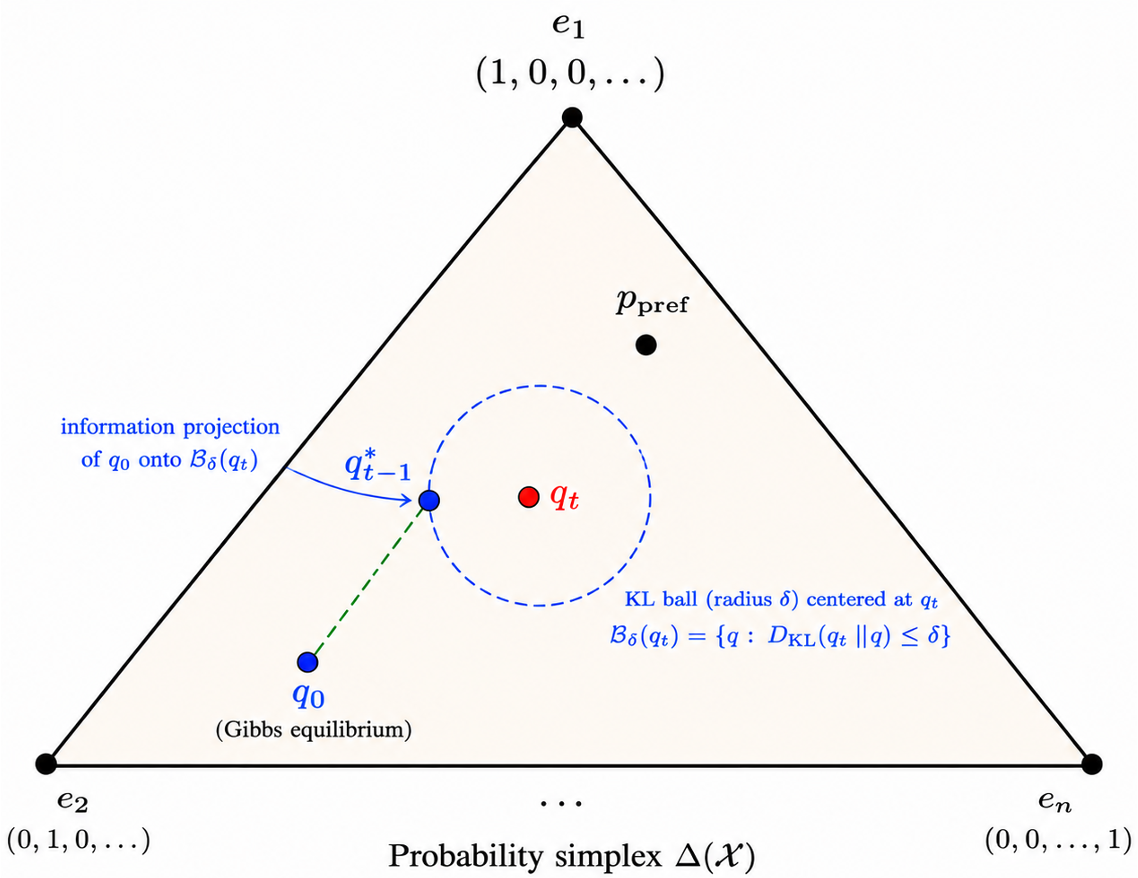}
\caption{Backward information-projection construction. The KL ball
\(\mathcal B_\delta(q_t)=\{q:D_{\mathrm{KL}}(q_t\|q)\leq\delta\}\) is centered
at \(q_t\), and \(q_{t-1}^{\star}\) is its information projection toward the
Gibbs distribution \(q_0\). Read in the forward direction,
\(D_{\mathrm{KL}}(q_t\|q_{t-1}^{\star})\) is the kinetic energy of the step,
while the Pythagorean inequality relates this cost directly to the corresponding
increase in Gibbs-referenced free energy.}
\label{fig:pythagorean-simplex}
\end{figure}

Although the construction proceeds backward, the physical trajectory is read
forward as \(q_0\rightarrow q_1^\star\rightarrow\cdots\rightarrow
q_T=p_{\mathrm{pref}}\). Hence the ball radius is precisely the kinetic-energy
budget of the corresponding forward transition,
\begin{equation}
K_t=D_{\mathrm{KL}}(q_t\|q_{t-1}^{\star})\leq\delta.
\label{eq:kinetic_ball_radius}
\end{equation}

\subsection{Pythagorean Decomposition}

\begin{theorem}[Free-energy gain under information projection]
\label{thm:pythagorean_free_energy}
Let \(q_{t-1}^{\star}\) be the information projection in
\eqref{eq:backward_information_projection}. Then
\begin{equation}
D_{\mathrm{KL}}(q_t\|q_0)
\geq
D_{\mathrm{KL}}(q_t\|q_{t-1}^{\star})
+
D_{\mathrm{KL}}(q_{t-1}^{\star}\|q_0).
\label{eq:pythagorean_main}
\end{equation}
Equivalently,
\begin{equation}
F_r(q_t\|q_{t-1}^{\star})
\geq
D_{\mathrm{KL}}(q_t\|q_{t-1}^{\star})
=
K_t.
\label{eq:free_energy_kinetic_bound}
\end{equation}
Thus, along the forward trajectory, the kinetic energy expended in an optimal
step is upper bounded by the relative free-energy increase produced by that
step.
\end{theorem}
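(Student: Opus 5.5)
The plan is to recognize \eqref{eq:pythagorean_main} as an instance of Csisz\'ar's Pythagorean inequality for information projections onto convex sets, applied with the set \(C=\mathcal B_\delta(q_t)\) and the test point \(p=q_t\), and to give the short variational proof directly rather than just cite it. The equivalent form \eqref{eq:free_energy_kinetic_bound} then follows by subtracting \(D_{\mathrm{KL}}(q_{t-1}^{\star}\|q_0)\) from both sides. By \eqref{eq:incremental_relative_free_energy}, the left side becomes \(F_r(q_t\|q_{t-1}^{\star})\), and the remaining term is \(K_t\) by \eqref{eq:kinetic_ball_radius}.

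\emph{Preliminaries.} First I will check that the set and the minimizer behave well.
\begin{itemize}
\item Convexity: since \(D_{\mathrm{KL}}\) is jointly convex, \(q\mapsto D_{\mathrm{KL}}(q_t\|q)\) is convex. Hence \(\mathcal B_\delta(q_t)\) is convex.
\item Closedness: this map is lower semicontinuous, so \(\mathcal B_\delta(q_t)\) is closed, and therefore compact in \(\Delta(\mathcal X)\).
\item Nonemptiness: the ball contains \(q_t\) itself, because \(D_{\mathrm{KL}}(q_t\|q_t)=0\).
\item Existence and uniqueness: \(q\mapsto D_{\mathrm{KL}}(q\|q_0)\) is continuous and strictly convex, since \(q_0\) is Gibbs and thus has full support. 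So the minimizer \(q_{t-1}^{\star}\) exists and is unique.
\item Support: because \(D_{\mathrm{KL}}(q_t\|q_{t-1}^{\star})\le\delta<\infty\), we get \(\mathrm{supp}(q_t)\subseteq\mathrm{supp}(q_{t-1}^{\star})\).
\end{itemize}

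\emph{Variational step.} Write \(q^\star=q_{t-1}^{\star}\) and set \(q_\lambda=(1-\lambda)q^\star+\lambda q_t\) for \(\lambda\in[0,1]\). By convexity \(q_\lambda\in\mathcal B_\delta(q_t)\), so \(f(\lambda)=D_{\mathrm{KL}}(q_\lambda\|q_0)\) satisfies \(f(\lambda)\ge f(0)\). Since \(f\) is convex, its right derivative at \(0\) exists and must be nonnegative. Computing it termwise, the \(\sum_x(q_t(x)-q^\star(x))\) contribution vanishes, leaving
\[
\sum_x \bigl(q_t(x)-q^\star(x)\bigr)\log\frac{q^\star(x)}{q_0(x)}\ \ge\ 0 .
\]
Next I expand the defect
\[
D_{\mathrm{KL}}(q_t\|q_0)-D_{\mathrm{KL}}(q_t\|q^\star)-D_{\mathrm{KL}}(q^\star\|q_0).
\]
The \(q_t\log q_t\) terms cancel, and the remainder is exactly \(\sum_x q_t(x)\log\frac{q^\star(x)}{q_0(x)}-\sum_x q^\star(x)\log\frac{q^\star(x)}{q_0(x)}\). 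This is the nonnegative quantity above, which proves \eqref{eq:pythagorean_main}.

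\emph{Main obstacle.} The delicate point is justifying the derivative computation on the boundary of the simplex, where \(q^\star(x)=0\) for some states. I will handle it with the support inclusion. On coordinates where \(q^\star(x)=0\), we also have \(q_t(x)=0\), so \(q_\lambda(x)=0\) for all \(\lambda\) and these coordinates drop out. On the remaining coordinates, \(q^\star(x)>0\), so \(q_\lambda(x)\log(q_\lambda(x)/q_0(x))\) is differentiable at \(\lambda=0\) with the stated derivative. The terms \(0\log 0\) are interpreted as \(0\) throughout. Convexity of \(f\) guarantees that the one-sided difference quotients converge monotonically, so passing to the limit is legitimate.
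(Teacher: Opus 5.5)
Your proposal is correct and follows the paper's route. The paper also obtains \eqref{eq:pythagorean_main} from the KL Pythagorean inequality for the information projection onto the convex set $\mathcal B_\delta(q_t)$, using $q_t\in\mathcal B_\delta(q_t)$, and then subtracts $D_{\mathrm{KL}}(q_{t-1}^{\star}\|q_0)$ to get \eqref{eq:free_energy_kinetic_bound}; the difference is that the paper only cites the inequality, while you prove it by the one-sided directional-derivative argument and check its hypotheses (convexity and closedness of the ball, and the support inclusion that handles the simplex boundary).
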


\begin{proof}
The KL Pythagorean inequality associated with the information projection
\eqref{eq:backward_information_projection} gives
\eqref{eq:pythagorean_main}. Subtracting
\(D_{\mathrm{KL}}(q_{t-1}^{\star}\|q_0)\) from both sides gives
\begin{equation}
D_{\mathrm{KL}}(q_t\|q_0)
-
D_{\mathrm{KL}}(q_{t-1}^{\star}\|q_0)
\geq
D_{\mathrm{KL}}(q_t\|q_{t-1}^{\star}).
\end{equation}
The left-hand side is \(F_r(q_t\|q_{t-1}^{\star})\), while the right-hand side
is \(K_t\), proving the result.
\end{proof}

The orientation of the divergences in \eqref{eq:pythagorean_main} is
consequential. The same quantity \(D_{\mathrm{KL}}(q_t\|q_{t-1}^{\star})\)
that defines the backward ball becomes the kinetic energy of the forward step.
The free-energy--kinetic-energy relation is therefore exact; no small-step
symmetry approximation between the two orientations of KL divergence is
required.

\subsection{Intermediate Steps vs. Cost of a Direct Jump}

Consider the final destination \(q_T=p_{\mathrm{pref}}\). A direct one-step
transition from equilibrium consumes
\begin{equation}
K_{\mathrm{direct}}
=
D_{\mathrm{KL}}(p_{\mathrm{pref}}\|q_0).
\end{equation}
Introducing an optimally chosen predecessor \(q_{T-1}^{\star}\) and applying
\eqref{eq:pythagorean_main} gives
\begin{equation}
D_{\mathrm{KL}}(p_{\mathrm{pref}}\|q_0)
\geq
D_{\mathrm{KL}}(p_{\mathrm{pref}}\|q_{T-1}^{\star})
+
D_{\mathrm{KL}}(q_{T-1}^{\star}\|q_0).
\label{eq:direct_vs_intermediate}
\end{equation}
Thus, the direct KL cost dominates the two information-geometric components
created by the optimal intermediate point. This provides the basic motivation
for resolving a large displacement into a sequence of information projections as the system moves from the Gibbs distribution $q_0$ to the given final distribution $p_\text{pref}$.
Note that, this statement is not true for an arbitrary subdivision; it is the
projection geometry that produces \eqref{eq:direct_vs_intermediate}.

\subsection{Accumulated Energy Relation}

Applying Theorem~\ref{thm:pythagorean_free_energy} to every step gives
\begin{equation}
F_r(q_t\|q_{t-1}^{\star})\geq K_t,
\qquad t=1,\ldots,T.
\end{equation}
Summing along the path,
\begin{equation}
\sum_{t=1}^{T}F_r(q_t\|q_{t-1}^{\star})
\geq
\sum_{t=1}^{T}K_t.
\label{eq:accumulated_energy_inequality}
\end{equation}
The free-energy increments telescope, and for \(q_T=p_{\mathrm{pref}}\),
\begin{equation}
\sum_{t=1}^{T}F_r(q_t\|q_{t-1}^{\star})
=
D_{\mathrm{KL}}(p_{\mathrm{pref}}\|q_0).
\end{equation}
Consequently,
\begin{equation}
\sum_{t=1}^{T}K_t
\leq
D_{\mathrm{KL}}(p_{\mathrm{pref}}\|q_0).
\label{eq:energy_conservation_bound}
\end{equation}
Equation~\eqref{eq:energy_conservation_bound} gives an
energy-conservation-like interpretation of the information-geometric path. \cite{kolchinsky2021dependence}
The free-energy separation between equilibrium and the target sets the total
energetic scale of the transition, while the cumulative kinetic expenditure
along a sequence of information projections cannot exceed this quantity. At
each step, the increase in stored nonequilibrium free energy is sufficient to
account for the informational kinetic cost of the transition.
One can restore the physical units by multiplying both sides of \eqref{eq:energy_conservation_bound} with $k_B \tau$ to convert the right-hand side to the nonequilibrium free-energy difference between
the preferred state and Gibbs equilibrium.
As a result, informational kinetic expenditure along the path is converted into the free energy accumulated in constructing the nonequilibrium state.

%% file: 4-problem-statement-revised.tex
\section{Problem Statement}
\label{sec:problem}

We consider the forward evolution of a distribution on the probability simplex
from the Gibbs equilibrium \(q_0\) to a prescribed nonequilibrium distribution
\(p_{\mathrm{pref}}\).  The trajectory is represented by
\begin{equation}
    q_0 \rightarrow q_1 \rightarrow \cdots \rightarrow q_T
    = p_{\mathrm{pref}},
    \label{eq:forward_path}
\end{equation}
where each transition is subject to the kinetic-energy constraint
\begin{equation}
    K_t
    =
    D_{\mathrm{KL}}(q_t\|q_{t-1})
    \leq \delta,
    \qquad t=1,\ldots,T.
    \label{eq:kinetic_constraint_problem}
\end{equation}
Here, \(\delta>0\) specifies the maximum informational displacement that can be
realized in a single step.

The terminal time is not prescribed a priori.  Instead, we seek the smallest
number of admissible transitions required to reach the preferred distribution.
This leads to the minimum-hitting-time problem
\begin{equation}
\begin{aligned}
    T_\delta^\star
    \triangleq
    \min_{\substack{T\in\mathbb N,\\
                    q_1,\ldots,q_{T-1}\in\Delta(\mathcal X)}}
    \quad & T \\
    \mathrm{subject\ to}\quad
    & q_T=p_{\mathrm{pref}},\\
    & D_{\mathrm{KL}}(q_t\|q_{t-1})\leq\delta,
      \qquad t=1,\ldots,T .
\end{aligned}
\label{eq:global_problem}
\end{equation}
Problem~\eqref{eq:global_problem} is a minimum-hitting-time problem on the
probability simplex under an information-geometric kinetic constraint.  Unlike
a Euclidean shortest-path problem, KL divergence does not satisfy a triangle
inequality, and the optimal trajectory is therefore not characterized solely
by the endpoint divergence
\(D_{\mathrm{KL}}(p_{\mathrm{pref}}\|q_0)\).  In particular, a locally optimal
move need not, without further structure, coincide with the globally
minimum-time policy.

The Pythagorean geometry developed in Section~\ref{sec:pythagorean}
nevertheless provides a natural constructive principle for generating
admissible paths.  In the following section, we introduce a sequence of
information projections that maximizes the free-energy progress associated with
each admissible step, derive its explicit form, and establish a finite-time
bound for the resulting trajectory.  The projection rule is therefore used as
a geometrically canonical local construction with provable energy and
hitting-time guarantees, without requiring an a priori claim of global
minimum-time optimality over all admissible trajectories.

%% file: 5-approach-least-action-information-projection.tex
\section{Approach: Least-Action Path via Information Projection}
\label{sec:approach}

The minimum-hitting-time problem in Section~\ref{sec:problem} does not, in
general, reduce directly to a myopic optimization.  The Pythagorean structure
of Section~\ref{sec:pythagorean}, however, suggests a canonical local policy:
given a point \(q_t\), choose among all admissible predecessors the one that
maximizes the relative free-energy gain while preserving reachability of
\(q_t\) in a single kinetic step.  Starting from
\(q_T=p_{\mathrm{pref}}\), this construction proceeds backward toward the Gibbs
reference \(q_0\); reversing the resulting sequence gives the forward physical
trajectory.

For a given \(q_t\), define the greedy predecessor by
\begin{align}
\label{eq:greedy_fr_problem-obj}
q_{t-1}^{\star}
&= \underset{q\in\Delta(\mathcal X)}{\text{argmax}} \ ~ F_r(q_t\|q) \\
&\mathrm{subject\ to}
\ ~
D_{\mathrm{KL}}(q_t\|q)\leq\delta.
\label{eq:greedy_fr_problem-constraint}
\end{align}
Since
\[
F_r(q_t\|q)
=
D_{\mathrm{KL}}(q_t\|q_0)
-
D_{\mathrm{KL}}(q\|q_0),
\]
the first term is fixed for a given \(q_t\), and
\eqref{eq:greedy_fr_problem-obj}, \eqref{eq:greedy_fr_problem-constraint} is equivalently
\begin{align}
\label{eq:greedy_projection_problem-obj}
q_{t-1}^{\star}
&= \underset{q\in\Delta(\mathcal X)}{\text{argmin}} \ ~ D_{\mathrm{KL}}(q\|q_0) \\
&\mathrm{subject\ to}
\ ~
D_{\mathrm{KL}}(q_t\|q)\leq\delta.
\label{eq:greedy_projection_problem-constraint}
\end{align}
Thus, among all distributions from which \(q_t\) is reachable in one admissible
forward step, the rule selects the predecessor with minimum nonequilibrium free
energy relative to Gibbs equilibrium.  It is therefore locally optimal in
free-energy progress, although we do not require it to coincide with the
globally minimum-hitting-time policy of \eqref{eq:global_problem}.

\subsection{Action and the Local Least-Action Principle}

The constrained problem \eqref{eq:greedy_fr_problem-obj}, \eqref{eq:greedy_fr_problem-constraint} admits a direct
least-action interpretation. Let \(\lambda_t\geq 0\) denote the Lagrange
multiplier associated with the kinetic constraint.  The Lagrangian for the
maximization problem is
\begin{equation}
\mathcal L_t(q,\lambda_t)
=
F_r(q_t\|q)
-
\lambda_t
\left[
D_{\mathrm{KL}}(q_t\|q)-\delta
\right].
\label{eq:greedy_lagrangian_max}
\end{equation}
The additive term \(\lambda_t\delta\) does not depend on \(q\).  Hence maximizing
\eqref{eq:greedy_lagrangian_max} is equivalent to minimizing
\begin{equation}
\mathcal A_t(q;q_t)
\triangleq
\lambda_t
\underbrace{D_{\mathrm{KL}}(q_t\|q)}_{\text{kinetic energy}} ~~
-
\underbrace{F_r(q_t\|q)}_{\text{relative free-energy}}.
\label{eq:local_action}
\end{equation}
We refer to \(\mathcal A_t\) as the \textbf{instantaneous information-geometric
action}.  It has the same kinetic-minus-potential algebraic structure as the
classical Lagrangian: the first term penalizes motion, while the second rewards
the free-energy increase realized by the transition.  Since
\(D_{\mathrm{KL}}(q_t\|q_0)\) is constant with respect to \(q\),
\eqref{eq:local_action} is equivalently
\begin{equation}
\mathcal A_t(q;q_t)
=
D_{\mathrm{KL}}(q\|q_0)
+
\lambda_t D_{\mathrm{KL}}(q_t\|q)
-
D_{\mathrm{KL}}(q_t\|q_0).
\label{eq:local_action_expanded}
\end{equation}
Therefore the greedy least-action step is obtained from
\begin{equation}
q_{t-1}^{\star}
= \underset{q\in\Delta(\mathcal X)}{\text{argmin}} ~ 
\left\{
D_{\mathrm{KL}}(q\|q_0)
+
\lambda_t D_{\mathrm{KL}}(q_t\|q)
\right\}.
\label{eq:least_action_unconstrained}
\end{equation}
The action in \eqref{eq:local_action} is dimensionless under our normalization.
Restoring energy units multiplies both energetic terms by \(k_B\tau\); if one also
associates a physical duration \(\Delta\) with each discrete update, then
\(k_B\tau \,\Delta\,\mathcal A_t\) has the dimensions of mechanical action.
Our use of ``least action'' refers to the variational structure of each
information-geometric step rather than to an assertion of microscopic
Hamiltonian dynamics.

\subsection{Solving for the Least Action Path}
\label{subsec:lambert_solution}

The orientation of the KL divergences in
\eqref{eq:least_action_unconstrained} is dictated by the thermodynamic and
Pythagorean construction.  Unlike the exponential-tilting solution obtained
when the current distribution occupies the first argument of both divergences,
the present problem leads naturally to the Lambert \(W\) function.

\begin{theorem}[Optimal information projection]
\label{thm:lambert_projection}
Assume \(q_0(x)>0\) and \(q_t(x)>0\) for all \(x\in\mathcal X\).  If
\(
D_{\mathrm{KL}}(q_t\|q_0)\leq\delta,
\)
then \(q_{t-1}^{\star}=q_0\).  Otherwise, the kinetic constraint is active and
the unique solution of \eqref{eq:greedy_projection_problem-obj}, \eqref{eq:greedy_projection_problem-constraint} is
\begin{equation}
q_{t-1}^{\star}(x) =
\frac{q_t(x)}
{W_0\!\left(
\alpha_t\,\frac{q_t(x)}{q_0(x)}
\right)} \cdot \frac{1}{Z_0},
\label{eq:lambert_solution}
\end{equation}
where
\[ Z_0= \sum_{y\in\mathcal X} \frac{q_t(y)}
{W_0\!\left(\alpha_t\,\frac{q_t(y)}{q_0(y)} \right)} \]
is the normalization constant and \(W_0(\cdot)\) is the principal branch of the Lambert \(W\) function \cite{corless1996lambert} and
\(\alpha_t>0\) is chosen so that
\begin{equation}
D_{\mathrm{KL}}(q_t\|q_{t-1}^{\star})=\delta.
\label{eq:lambert_constraint}
\end{equation}
\end{theorem}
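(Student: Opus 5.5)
The plan is to treat \eqref{eq:greedy_projection_problem-obj}--\eqref{eq:greedy_projection_problem-constraint} as a convex program and read the formula \eqref{eq:lambert_solution} off its KKT conditions.

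\textbf{Step 1: convexity, uniqueness and the trivial case.} The objective \(D_{\mathrm{KL}}(q\|q_0)\) is strictly convex in \(q\) on the simplex. The constraint function is
\[
D_{\mathrm{KL}}(q_t\|q)=\sum_x q_t(x)\log q_t(x)-\sum_x q_t(x)\log q(x),
\]
which is convex in \(q\). The feasible set is therefore convex, and it is closed because the constraint function is lower semicontinuous. It is nonempty because \(q_t\) itself is feasible. Hence a minimizer exists and is unique.

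The case \(D_{\mathrm{KL}}(q_t\|q_0)\le\delta\) is immediate: \(q_0\) is then feasible and attains the global minimum \(0\) of the objective, so \(q_{t-1}^\star=q_0\).

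In the other case, I will first note that any feasible \(q\) must satisfy \(q(x)>0\) for every \(x\). Indeed, \(q_t(x)>0\) for all \(x\), so \(q(x)=0\) anywhere would make \(D_{\mathrm{KL}}(q_t\|q)=\infty\). The minimizer is therefore interior, and only the normalization constraint and the KL constraint need multipliers. Slater's condition holds, since \(q=q_t\) gives \(D_{\mathrm{KL}}(q_t\|q_t)=0<\delta\). Consequently the KKT conditions are both necessary and sufficient.

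\textbf{Step 2: stationarity and the Lambert substitution.} I will form the Lagrangian with multiplier \(\lambda\ge0\) for the KL constraint and \(\nu\) for normalization. Stationarity in \(q(x)\) gives
\[
\log\frac{q(x)}{q_0(x)}+1-\lambda\frac{q_t(x)}{q(x)}+\nu=0 .
\]

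If \(\lambda=0\), this forces \(q\propto q_0\), so \(q=q_0\). That point is infeasible by assumption, so \(\lambda>0\). Complementary slackness then makes the constraint active, which is \eqref{eq:lambert_constraint}.

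Next set \(u(x)=\lambda q_t(x)/q(x)\), so that \(q(x)=\lambda q_t(x)/u(x)\). Substituting turns the stationarity condition into
\[
u+\log u=\log\!\bigl(\lambda q_t(x)/q_0(x)\bigr)+1+\nu,
\qquad\text{that is,}\qquad
u e^{u}=\alpha\,\frac{q_t(x)}{q_0(x)},
\quad \alpha=\lambda e^{1+\nu}>0 .
\]
Since \(u>0\), the right-hand side is positive, and the principal branch gives the unique positive solution \(u(x)=W_0(\alpha q_t(x)/q_0(x))\). Hence \(q(x)=\lambda q_t(x)/W_0(\cdot)\). Normalization then forces \(\lambda=1/Z_0\), which yields exactly \eqref{eq:lambert_solution}. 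The pair \((\lambda,\nu)\) collapses to the single free parameter \(\alpha\), which is pinned down by the active constraint.

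\textbf{Step 3: existence of \(\alpha\) (the main obstacle).} I must show that some \(\alpha>0\) makes \(g(\alpha)=D_{\mathrm{KL}}(q_t\|q_\alpha)\) equal to \(\delta\), where \(q_\alpha\) denotes the right-hand side of \eqref{eq:lambert_solution}. The function \(g\) is continuous, so I will argue by the intermediate value theorem using the two limits.

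As \(\alpha\to0\), \(W_0(z)\sim z\), so \(q_\alpha(x)\propto q_t(x)/W_0\to q_0(x)\) after normalization. Thus \(g(\alpha)\to D_{\mathrm{KL}}(q_t\|q_0)>\delta\).

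As \(\alpha\to\infty\), \(W_0(\alpha r)=\log\alpha-\log\log\alpha+O(1)\) uniformly in \(r\) over the finitely many ratios \(q_t(x)/q_0(x)\). All the denominators are therefore asymptotically equal, so \(q_\alpha\to q_t\) and \(g(\alpha)\to0<\delta\).

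This gives some \(\alpha\) with \(g(\alpha)=\delta\). For that \(\alpha\), the triple \(\bigl(q_\alpha,\ \lambda=1/Z_0,\ \nu=\log(\alpha/\lambda)-1\bigr)\) satisfies all KKT conditions. By sufficiency \(q_\alpha\) is optimal, and by the strict convexity in Step 1 it is the unique solution. Any ambiguity in \(\alpha\) is therefore harmless for \(q_{t-1}^\star\). Monotonicity of \(g\) would give uniqueness of \(\alpha\) as well, but it is not needed for the theorem.
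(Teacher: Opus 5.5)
Your proposal is correct and takes the same route as the paper's appendix proof: Lagrangian stationarity, the substitution \(z_x=\lambda_t q_t(x)/q(x)\) giving \(z_x e^{z_x}=\alpha_t q_t(x)/q_0(x)\) and hence \(W_0\), normalization to eliminate \(\lambda_t\), and strict convexity for uniqueness. Your version is more complete than the paper's, because it also justifies interiority of feasible points and Slater's condition, derives activity of the constraint from \(\lambda>0\) via complementary slackness, and uses the intermediate value theorem to prove that a suitable \(\alpha_t>0\) exists; the paper only asserts or omits these steps.
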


\begin{proof} 
See Appendix.~\ref{sec:proof_thm_lambert}
\end{proof}

The solution in \eqref{eq:lambert_solution} is a nonlinear deformation of
\(q_t\) toward the Gibbs reference.  The scalar \(\alpha_t\) determines how far
the projection moves while the KL constraint fixes the kinetic expenditure.
Repeated application produces a sequence of Lambert-\(W\) projections rather
than the exponential-family recursion of the earlier KL orientation.

\subsection{Finite-Time Performance Guarantee}
\label{subsec:performance_guarantee}

The Pythagorean inequality gives a simple bound on the number of full
kinetic-budget projections required by the construction.  To distinguish this
quantity from the globally optimal hitting time \(T_\delta^\star\), let $T_\delta^{\mathrm{IP}}$ be the number of steps for the information-projection (IP) based greedy solution takes.

\begin{theorem}[Finite-time bound]
\label{thm:finite_time_bound}
The number of forward transitions in the information-projection path satisfies
\begin{equation}
T_\delta^\star \leq T_\delta^{\mathrm{IP}} \leq
\left\lceil \frac{D_{\mathrm{KL}}(p_{\mathrm{pref}}\|q_0)}{\delta} \right\rceil.
\label{eq:hitting_time_bound}
\end{equation}
\end{theorem}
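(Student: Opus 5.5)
The lower bound \(T_\delta^\star\le T_\delta^{\mathrm{IP}}\) will be immediate: the information-projection path, read forward, is one feasible trajectory for \eqref{eq:global_problem}. It ends at \(p_{\mathrm{pref}}\), and every step satisfies \(D_{\mathrm{KL}}(q_t\|q_{t-1})\le\delta\) by construction. So the minimum over all feasible trajectories can be no larger. The substance is the upper bound, which I will obtain by tracking the free energy \(D_{\mathrm{KL}}(\cdot\|q_0)\) along the backward construction.

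\textbf{Setting up the backward recursion.} Start from \(r_0:=p_{\mathrm{pref}}\), where \(r_k\) denotes the point obtained after \(k\) backward projections. If \(D_{\mathrm{KL}}(p_{\mathrm{pref}}\|q_0)=0\), then \(p_{\mathrm{pref}}=q_0\) and no steps are needed, so assume this quantity is positive. At each stage apply Theorem~\ref{thm:lambert_projection}. If \(D_{\mathrm{KL}}(r_k\|q_0)\le\delta\), the projection returns \(q_0\) and the construction stops after one more step. Otherwise the constraint is active, so the step has kinetic energy exactly \(\delta\), namely \(D_{\mathrm{KL}}(r_k\|r_{k+1})=\delta\).

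I need to check that Theorem~\ref{thm:lambert_projection} can be applied repeatedly. Its hypothesis requires full support, and the Lambert-\(W\) formula \eqref{eq:lambert_solution} produces a strictly positive \(r_{k+1}\) because \(W_0\) is positive on positive arguments. So positivity propagates. If \(p_{\mathrm{pref}}\) lacks full support, I would argue directly with the convex projection problem, or restrict to its support. Either way, I only need existence of the minimizer and activity of the constraint.

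\textbf{Free-energy decrease per step.} For each active step, Theorem~\ref{thm:pythagorean_free_energy} gives
\[
D_{\mathrm{KL}}(r_{k+1}\|q_0)\le D_{\mathrm{KL}}(r_k\|q_0)-D_{\mathrm{KL}}(r_k\|r_{k+1})=D_{\mathrm{KL}}(r_k\|q_0)-\delta .
\]
By induction, \(D_{\mathrm{KL}}(r_k\|q_0)\le D_{\mathrm{KL}}(p_{\mathrm{pref}}\|q_0)-k\delta\) for every \(k\) reached by active steps. Since free energy is nonnegative, the number of active steps is finite, so the construction terminates.

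\textbf{Counting the steps.} Let \(k\) be the number of active steps; then \(T_\delta^{\mathrm{IP}}=k+1\), the extra step being the final jump to \(q_0\). If \(k\ge1\), the last active step was taken from \(r_{k-1}\), so \(\delta<D_{\mathrm{KL}}(r_{k-1}\|q_0)\). Combining this with the induction bound gives \(\delta<D_{\mathrm{KL}}(p_{\mathrm{pref}}\|q_0)-(k-1)\delta\), that is, \(k<D_{\mathrm{KL}}(p_{\mathrm{pref}}\|q_0)/\delta\). Because \(k\) is an integer, this strict inequality yields \(k+1\le\lceil D_{\mathrm{KL}}(p_{\mathrm{pref}}\|q_0)/\delta\rceil\); I will check the case where the ratio is an integer separately. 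The case \(k=0\) is trivial, since then \(T_\delta^{\mathrm{IP}}=1\le\lceil\cdot\rceil\) whenever the divergence is positive.

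\textbf{Main obstacle.} The delicate point is justifying that each non-terminal step consumes \emph{exactly} \(\delta\) rather than merely at most \(\delta\). This requires that the constraint is active whenever \(q_0\notin\mathcal B_\delta(r_k)\). I will take this from Theorem~\ref{thm:lambert_projection}, and independently from convexity: the objective \(D_{\mathrm{KL}}(\cdot\|q_0)\) has its unique unconstrained minimizer outside the convex feasible set, so the minimizer lies on the boundary. I also need that the Pythagorean inequality of Theorem~\ref{thm:pythagorean_free_energy} applies. It does, because \(\mathcal B_\delta(r_k)\) is convex in its second argument and contains \(r_k\).
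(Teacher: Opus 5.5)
Your proof is correct and follows the paper's argument: you telescope the Pythagorean bound $D_{\mathrm{KL}}(r_{k+1}\|q_0)\le D_{\mathrm{KL}}(r_k\|q_0)-\delta$ over the active backward steps, add the final jump to $q_0$, round using integrality, and get the lower bound from admissibility of the IP path. The only difference is cosmetic: you take the strict inequality $k\delta<D_{\mathrm{KL}}(p_{\mathrm{pref}}\|q_0)$ from the activity condition $D_{\mathrm{KL}}(r_{k-1}\|q_0)>\delta$ rather than from the paper's $D_{\mathrm{KL}}(q_1^{\star}\|q_0)>0$, and that strict inequality already covers the integer-ratio case you planned to check separately.
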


\begin{proof}
Let \(N_\delta\) denote the number of backward information-projection steps for which the kinetic constraint is active, i.e.,
\[ D_{\mathrm{KL}}(q_t\|q_{t-1}^{\star})=\delta, \]
excluding the final step ($q_1\to q_0$). If $0<D_{\mathrm{KL}}(p_{\mathrm{pref}}\|q_0) \leq \delta$, $N_\delta = 0$ and \eqref{eq:hitting_time_bound} is trivial. Otherwise, for each full backward projection, by
Theorem~\ref{thm:pythagorean_free_energy},
\begin{equation}
\delta = D_{\mathrm{KL}}(q_t\|q_{t-1}^{\star}) \leq
D_{\mathrm{KL}}(q_t\|q_0) - D_{\mathrm{KL}}(q_{t-1}^{\star}\|q_0).
\label{eq:delta_progress}
\end{equation}
Summing \eqref{eq:delta_progress} over the \(N_\delta\) full projections telescopes and gives
\begin{equation}
N_\delta\delta \leq D_{\mathrm{KL}}(p_{\mathrm{pref}}\|q_0) - D_{\mathrm{KL}}(q_1^{\star}\|q_0),
\label{eq:telescope_N}
\end{equation}
where \(q_1^{\star}\) is the final projected point before the Gibbs distribution
becomes directly reachable. By the stopping rule,
\[ 0< D_{\mathrm{KL}}(q_1^{\star}\|q_0) \leq\delta, \]
and therefore
\begin{equation}
N_\delta\delta < D_{\mathrm{KL}} (p_{\mathrm{pref}}\|q_0). 
\label{eq:N_delta_bound}
\end{equation}
The final transition \(q_0\rightarrow q_1^{\star}\) has a positive kinetic cost at most \(\delta\), so the complete information-projection path contains \(T_\delta^{\mathrm{IP}}=N_\delta+1\) transitions.  Since \(N_\delta\) is integer, \eqref{eq:N_delta_bound} implies
\[ T_\delta^{\mathrm{IP}} \leq \left\lceil \frac{D_{\mathrm{KL}}(p_{\mathrm{pref}}\|q_0)}{\delta} \right\rceil. \]
Finally, \(T_\delta^\star\) is the minimum hitting time over all admissible paths, while the information-projection construction provides one admissible path.  Hence \(T_\delta^\star\leq T_\delta^{\mathrm{IP}}\), completing the proof.
\end{proof}

Note that, in the bound, the numerator is the dimensionless free-energy separation between the preferred nonequilibrium state and Gibbs equilibrium, while \(\delta\) is the kinetic-energy budget available per full projection.  Restoring physical units multiplies both by \(k_B \tau\), so their ratio, and hence the step-count bound, is unchanged.

\subsection{Discussion}

The construction above separates three distinct statements.  First,
\eqref{eq:global_problem} defines the globally minimum hitting-time problem.
Second, \eqref{eq:greedy_fr_problem-obj}, \eqref{eq:greedy_fr_problem-constraint} defines a canonical local policy obtained
from information geometry: each backward step maximizes the attainable
free-energy reduction under the kinetic budget.  Third, the Pythagorean theorem
provides a performance guarantee for the resulting path, leading to the
finite-time bound in Theorem~\ref{thm:finite_time_bound}.  We therefore do not
identify the greedy information-projection path with the globally
minimum-hitting-time trajectory; rather, it is a constructive least-action path
whose energetic progress and hitting time can be controlled explicitly.

The same formulation also exposes the physical structure of the path.  Each
local optimization minimizes a kinetic-minus-relative-free-energy action, while
the Pythagorean inequality guarantees that the free-energy gain of the step is
at least sufficient to account for its kinetic expenditure.  Summed along the
trajectory, these local relations recover the energy-conservation-like bound of
Section~\ref{sec:pythagorean}.  Thus, information projection simultaneously
provides a variational rule for choosing the path, an explicit Lambert-\(W\)
update for each step, and a direct connection between thermodynamic distance
from equilibrium and the number of admissible distributional transitions
required to construct the target state.

%% file: 6-numeric-example.tex
\begin{figure}[ht]
\centering
\includegraphics[width=0.48\columnwidth]{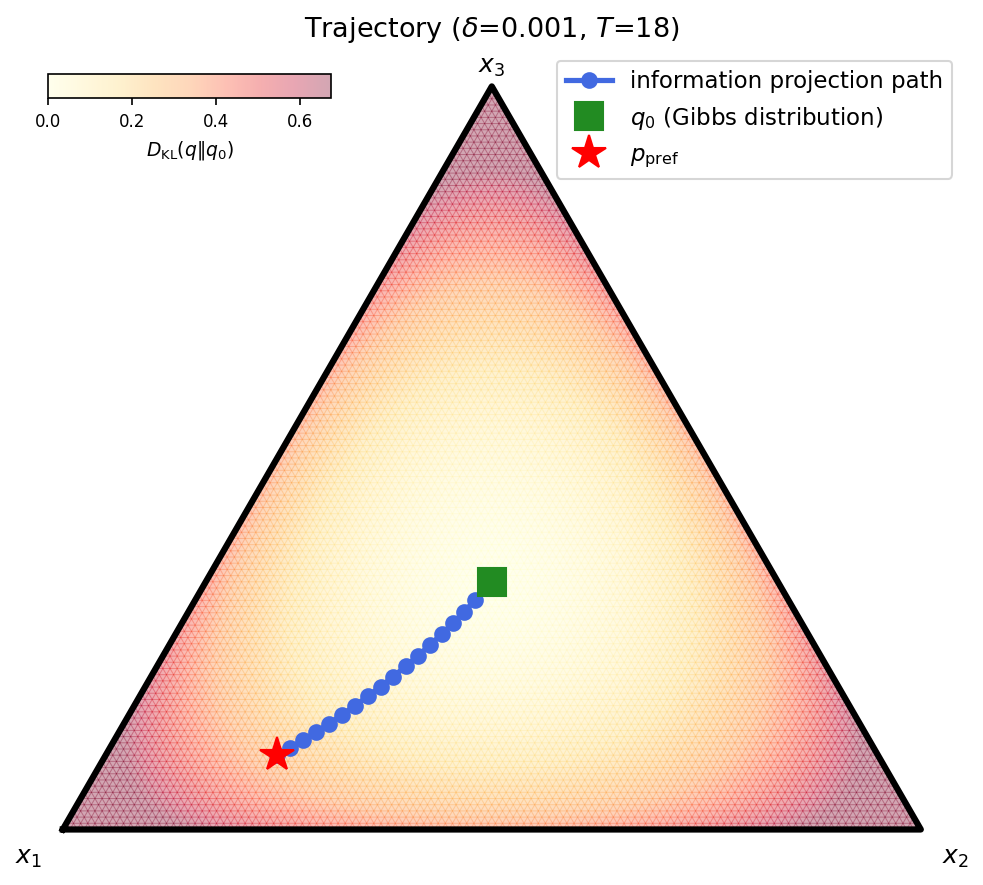}
\includegraphics[width=0.48\columnwidth]{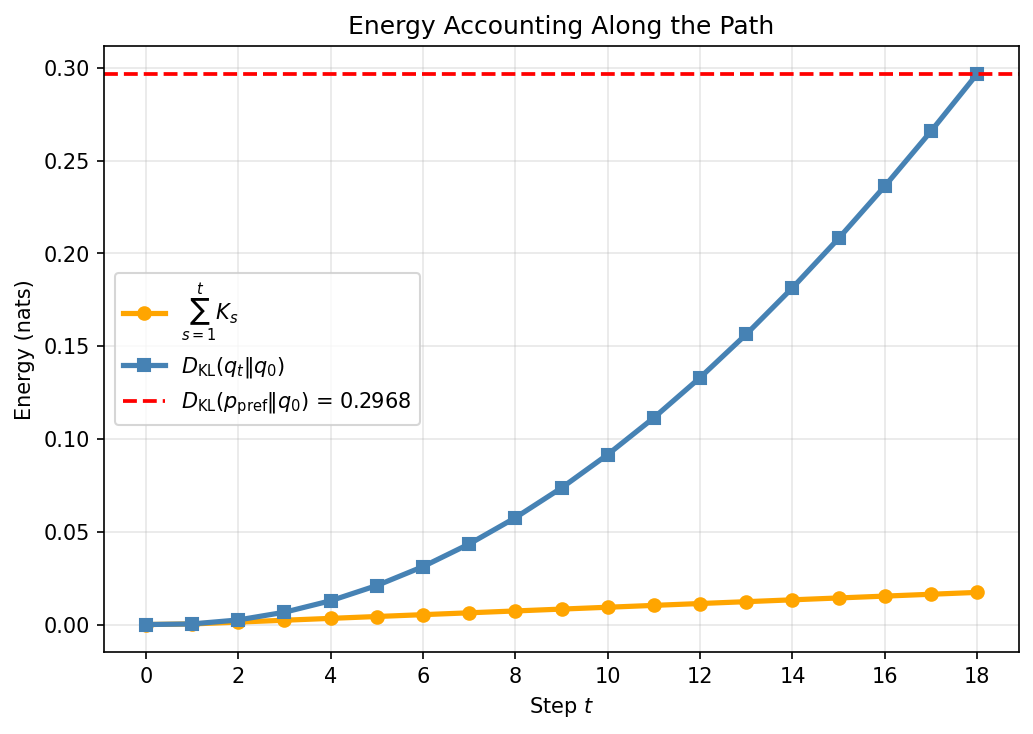}\\[4pt]
\includegraphics[width=0.48\columnwidth]{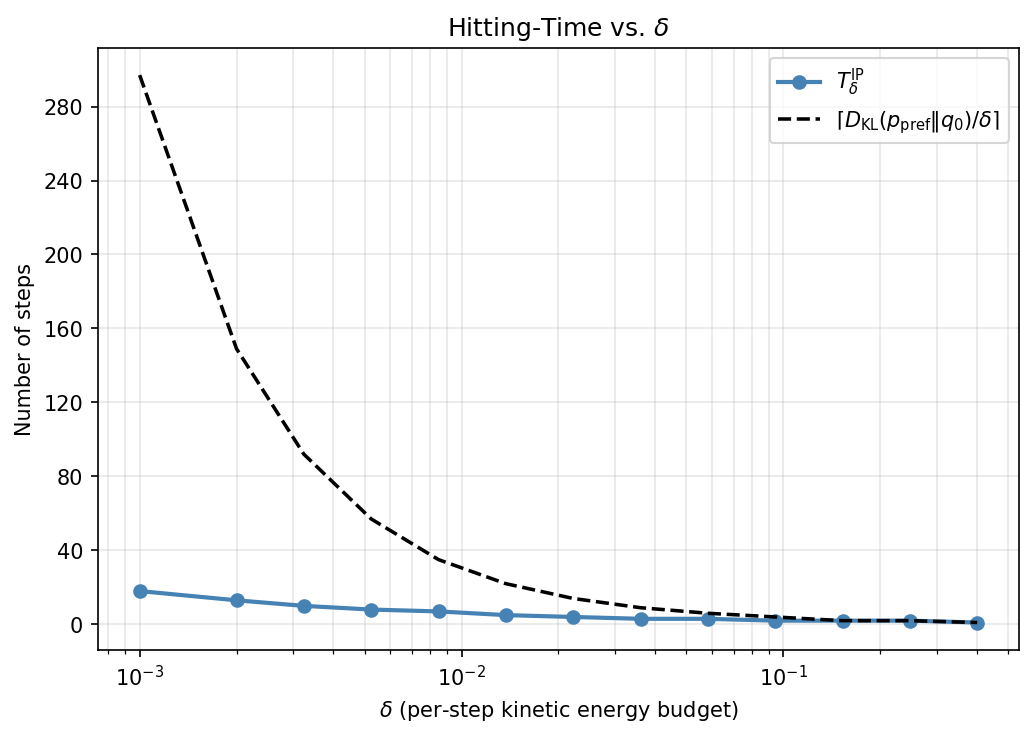}
\includegraphics[width=0.48\columnwidth]{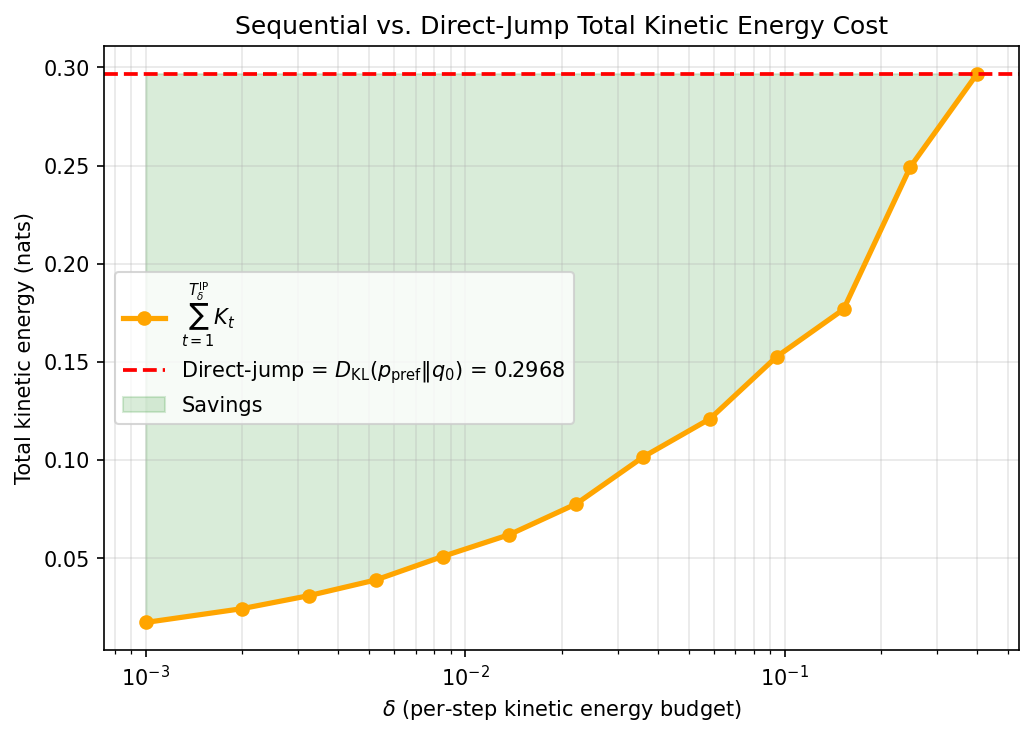}
\vspace{-0.1in}
\caption{Numerical results for $q_0 = (1/3, 1/3, 1/3)$, $p_{\mathrm{pref}} = (0.7, 0.2, 0.1)$, with $D_{\mathrm{KL}}(p_{\mathrm{pref}}\|q_0) \approx 0.297$ nats. \textbf{Top-left:} Least-action trajectory on the simplex with $\delta = 0.001$ ($T = 18$ steps) over the free-energy landscape $D_{\mathrm{KL}}(q\|q_0)$. \textbf{Top-right:} Running nonequilibrium free energy $D_{\mathrm{KL}}(q_t\|q_0)$ and cumulative kinetic cost $\sum_{s\leq t} K_s$ along the path, illustrating the energy-conservation bound $\sum_t K_t \leq D_{\mathrm{KL}}(p_{\mathrm{pref}}\|q_0)$. \textbf{Bottom-left:} Hitting time $T^{\mathrm{IP}}_\delta$ versus the upper bound $\lceil D_{\mathrm{KL}}(p_{\mathrm{pref}}\|q_0)/\delta \rceil$; the IP path is well below the bound due to $F_r - K > 0$ at each projection. \textbf{Bottom-right:} Total sequential kinetic cost $\sum_t K_t$ versus the direct-jump cost $D_{\mathrm{KL}}(p_{\mathrm{pref}}\|q_0)$, showing an order-of-magnitude reduction for small $\delta$.}
\vspace{-0.2in}
\label{fig:numeric}
\end{figure}

\section{Numerical Illustration}
\label{sec:numeric}

We illustrate the stopping-time formulation with a concrete example on the $2$-simplex $\Delta(\mathcal{X})$ with $|\mathcal{X}|=3$. Let the initial equilibrium distribution be uniform, $\pi_0 = (1/3,\, 1/3,\, 1/3)$, and the preferred distribution be $p_{\mathrm{pref}} = (0.7,\, 0.2,\, 0.1)$. The initial KL divergences are $D_{\mathrm{KL}}(\pi_0 \| p_{\mathrm{pref}}) = 0.324$ nats and $D_{\mathrm{KL}}(p_{\mathrm{pref}} \| \pi_0) = 0.297$ nats. We sweep the kinetic constraint over $\delta \in [0.001, 0.4]$ and apply the backward information-projection path defined by Eqs.~\eqref{eq:greedy_fr_problem-obj}–\eqref{eq:greedy_fr_problem-constraint}, whose closed-form solution is given by Theorem~\ref{thm:lambert_projection}. In all cases the trajectory terminates exactly at $p_{\mathrm{pref}}$.

\noindent \textbf{Trajectory and energy evolution:}
Figure~\ref{fig:numeric} (top-left) shows the least-action path on the simplex for $\delta = 0.001$. The trajectory traces a smooth curve from the centroid toward the target, consistent with the exponential tilting update of Theorem~2. Figure~\ref{fig:numeric} (top-right) displays the corresponding energy evolution: both $F_{\mathrm{rel}}$ and $P_{\mathrm{rel}}$ increase monotonically toward their maximum values, confirming steady progress toward the preferred distribution at each step.

\noindent \textbf{Effect of step size:}
Figure~\ref{fig:numeric} (bottom-left) reveals the tradeoff between step size and stopping time. As $\delta$ decreases from $0.4$ to $0.001$, the stopping time $N_\delta$ increases from $1$ to $18$, while the total kinetic cost decreases from $0.297$ to $0.017$ nats. This demonstrates that smaller steps are fundamentally more efficient, converting kinetic energy into free-energy reduction with less waste.

\noindent \textbf{Kinetic energy savings:}
Figure~\ref{fig:numeric} (bottom-right) compares the sequential kinetic cost against the direct-jump baseline $D_{\mathrm{KL}}(p_{\mathrm{pref}} \| \pi_0) = 0.297$. For all tested step sizes, the sequential strategy achieves strictly lower total kinetic energy. At $\delta = 0.001$, the savings exceed $94\%$, demonstrating the dominance of incremental motion established by the Pythagorean theorem (Theorem~\ref{thm:pythagorean_free_energy}).

%% file: 7-geodesic-cost-variational-distribution.tex
\section{Geodesic Cost on Variational Distribution}
\label{sec:geodesic_cost}

The development so far assumes that motion on the probability simplex is governed
only by the tradeoff between free-energy progress and the kinetic cost of changing
distributions. In many applications, however, there may be an additional cost
associated with occupying particular states or traversing particular regions of
the simplex. Such costs may represent additional energy penalties, risk, or application-specific preferences
over the intermediate distributions visited by the system.

In this section, we show that such costs can be incorporated without changing
the essential structure of the least-action construction. An expected state
cost induces a scalar cost landscape over the simplex and augments the
Gibbs-referenced free energy associated with each variational distribution.
Remarkably, this additional term can be absorbed exactly into the reference
measure through an exponential tilt of the Gibbs distribution. The resulting
problem therefore retains the same information-geometric structure as the
original least-action formulation, while admitting a direct physical
interpretation in which the added cost acts as an external contribution to the
underlying energy landscape.

\subsection{Greedy Least-Action Problem with Geodesic Cost}
\label{subsec:geodesic_greedy}

Let \(c(x)\geq 0\) denote a state-dependent cost associated with occupying
\(x\in\mathcal X\). For a variational distribution \(q\), define
\begin{equation}
    C(q)
    \triangleq
    \mathbb E_q[c(X)]
    =
    \sum_{x\in\mathcal X}q(x)c(x).
    \label{eq:expected_geodesic_cost}
\end{equation}
Although \eqref{eq:expected_geodesic_cost} is written as an expected state cost, it induces a cost function over \textit{distributions} \(q\) on the probability simplex and can therefore be used to penalize trajectories that pass through undesirable regions of the simplex.

It is useful to incorporate this term directly into the free-energy landscape.
Define the cost-augmented dimensionless free energy
\begin{equation}
    \Phi_\gamma(q)
    \triangleq
    D_{\mathrm{KL}}(q\|q_0)
    +
    \gamma C(q),
    \label{eq:augmented_free_energy}
\end{equation}
where \(\gamma\geq 0\) controls the importance of the additional cost. The
corresponding relative free-energy gain from \(q\) to \(q_t\) is
\begin{align}
    F_{r,\gamma}(q_t\|q)
    &\triangleq
    \Phi_\gamma(q_t)-\Phi_\gamma(q)
    \nonumber\\
    &=
    F_r(q_t\|q)
    +
    \gamma\left[C(q_t)-C(q)\right].
    \label{eq:augmented_relative_free_energy}
\end{align}
For a fixed \(q_t\), the greedy predecessor therefore solves
\begin{equation}
    q_{t-1}^{\star}
    =
    \underset{q\in\Delta(\mathcal X)}{\text{argmax}}  ~
    F_{r,\gamma}(q_t\|q)
    \quad
    \mathrm{subject\ to}
    \quad
    D_{\mathrm{KL}}(q_t\|q)\leq\delta.
    \label{eq:geodesic_greedy_problem}
\end{equation}
Since the terms \(\Phi_\gamma(q_t)\) and \(C(q_t)\) are fixed with respect to
\(q\), \eqref{eq:geodesic_greedy_problem} is equivalently
\begin{equation}
    q_{t-1}^{\star}
    =
    \underset{q\in\Delta(\mathcal X)}{\text{argmin}} ~
    \left\{
        D_{\mathrm{KL}}(q\|q_0)
        +
        \gamma C(q)
    \right\}
    \quad
    \mathrm{subject\ to}
    \quad
    D_{\mathrm{KL}}(q_t\|q)\leq\delta.
    \label{eq:geodesic_projection_problem}
\end{equation}

Introducing the multiplier \(\lambda_t\geq0\) for the kinetic constraint gives,
up to terms independent of \(q\),
\begin{align}
    \mathcal A_{t,\gamma}(q;q_t)
    &=
    \lambda_t D_{\mathrm{KL}}(q_t\|q)
    -
    F_{r,\gamma}(q_t\|q)
    \label{eq:geodesic_action}\\
    &\equiv
    D_{\mathrm{KL}}(q\|q_0)
    +
    \gamma C(q)
    +
    \lambda_t D_{\mathrm{KL}}(q_t\|q).
    \label{eq:geodesic_action_expanded}
\end{align}
Thus, the geodesic/state cost simply augments the free-energy part of the local
action. The least-action structure itself is unchanged: the local variational
problem continues to balance a KL kinetic term against progress in the relevant
free-energy landscape.

\subsection{Reduction to the Original Problem}
\label{subsec:geodesic_reduction}

The first two terms in \eqref{eq:geodesic_action_expanded} can be combined
exactly:
\begin{align}
    D_{\mathrm{KL}}(q\|q_0)+\gamma C(q)
    &=
    \sum_x q(x)
    \log\frac{q(x)}{q_0(x)e^{-\gamma c(x)}}.
    \label{eq:cost_combine}
\end{align}
Define the cost-adjusted reference distribution
\begin{equation}
    \widetilde q_0(x)
    \triangleq
    \frac{
        q_0(x)e^{-\gamma c(x)}
    }{
        \displaystyle
        \sum_{y\in\mathcal X}q_0(y)e^{-\gamma c(y)}
    }.
    \label{eq:tilted_gibbs}
\end{equation}
If
\begin{equation}
    Z_\gamma
    \triangleq
    \sum_{y\in\mathcal X}q_0(y)e^{-\gamma c(y)},
\end{equation}
then
\begin{equation}
    D_{\mathrm{KL}}(q\|q_0)+\gamma C(q)
    =
    D_{\mathrm{KL}}(q\|\widetilde q_0)
    -
    \log Z_\gamma.
    \label{eq:geodesic_kl_equivalence}
\end{equation}
The final term is independent of \(q\). Hence
\eqref{eq:geodesic_projection_problem} reduces exactly to
\begin{equation}
    q_{t-1}^{\star}
    =
\underset{q\in\Delta(\mathcal X)}{\text{argmin}} ~
    D_{\mathrm{KL}}(q\|\widetilde q_0)
    \quad
    \mathrm{subject\ to}
    \quad
    D_{\mathrm{KL}}(q_t\|q)\leq\delta.
    \label{eq:geodesic_reduced_problem}
\end{equation}
Therefore, the geodesic-cost extension has precisely the same mathematical form
as the original information-projection problem, with \(q_0\) replaced by the
exponentially tilted reference \(\widetilde q_0\).

In particular, the Lambert-\(W\) solution of
Theorem~\ref{thm:lambert_projection} carries over immediately:
\begin{equation}
q_{t-1}^{\star}(x) =
\frac{q_t(x)} {W_0\!\left( \alpha_t\,\frac{q_t(x)}{\widetilde q_0(x)} \right)} \cdot \frac{1}{\tilde{Z}_0} ,
\label{eq:geodesic_lambert_solution}
\end{equation}
where 
\[ \tilde{Z}_0 = \sum_{y\in\mathcal X}
\frac{q_t(y)} {W_0\!\left( \alpha_t\,\frac{q_t(y)}{\widetilde q_0(y)} \right)} \]
and \(\alpha_t>0\) is selected so that the kinetic constraint is active, unless the cost-adjusted reference itself is already feasible.

A state-dependent cost therefore reshapes the Gibbs reference against which nonequilibrium free energy is measured. Algebraically, this follows from the fact that the variational distribution \(q\) appears in the first argument of \(D_{\mathrm{KL}}(q|q_0)\), allowing the additive expected cost to be absorbed into \(q_0\) through an exponential tilt. The resulting reference distribution thus represents the effective equilibrium associated with the augmented energy
landscape.

\subsection{Implications and Physical Interpretation}
\label{subsec:geodesic_implications}

Equation~\eqref{eq:tilted_gibbs} has a particularly direct thermodynamic
interpretation. Suppose
\begin{equation}
    q_0(x)
    =
    \frac{e^{-\beta E(x)}}{Z_0},
    \qquad
    \beta=\frac{1}{k_B\tau},
    \label{eq:gibbs_original}
\end{equation}
is the Gibbs distribution associated with energy \(E(x)\). Then
\begin{equation}
    \widetilde q_0(x)
    \propto
    \exp\!\left[-\beta E(x)-\gamma c(x)\right].
    \label{eq:effective_gibbs}
\end{equation}
Thus, the added cost acts exactly as an additional contribution to the energy
landscape. If \(c(x)\) is itself expressed in physical energy units,
the dimensionless term \(\gamma c(x)\) in the equations above is replaced by
\(\beta\gamma c(x)\), giving
\begin{equation}
    \widetilde q_0(x)
    \propto
    \exp\!\left\{
        -\beta\left[E(x)+\gamma c(x)\right]
    \right\}.
    \label{eq:effective_hamiltonian}
\end{equation}
In this case, the geodesic penalty can be interpreted as an external potential
that modifies the effective Hamiltonian from \(E(x)\) to
\(E(x)+\gamma c(x)\). \cite{todorov2006linearly}

This gives a sharper physical interpretation than a generic path penalty.
Expensive states are exponentially suppressed in the effective equilibrium
distribution, and the least-action trajectory is correspondingly deformed away
from regions that place large probability mass on those states. The geometry of
the local optimization does not need to be redesigned: the external cost is
absorbed into the thermodynamic reference distribution, after which the same
Pythagorean projection and Lambert-\(W\) machinery applies.

There is also an important endpoint interpretation. If the additional cost is
regarded as a genuine physical contribution to the system energy that remains
present throughout the process, then \(\widetilde q_0\) is the appropriate Gibbs
equilibrium of the modified system, and the entire framework carries over with
\(q_0\) replaced by \(\widetilde q_0\). If, instead, \(q_0\) is required to
remain a fixed physical endpoint and \(c(x)\) represents only a transient
routing or safety penalty, then \eqref{eq:geodesic_kl_equivalence} should be
understood as a local algebraic equivalence: the intermediate projections are
biased by the tilted reference, while the original endpoint constraint at
\(q_0\) must still be imposed explicitly.

More generally, the result shows that state costs and thermodynamic reference
measures are not independent objects in this variational geometry. For linear
expected costs of the form \eqref{eq:expected_geodesic_cost}, introducing a cost
landscape is equivalent to changing the reference measure by exponential
tilting. In the present orientation, the cost therefore reshapes the
\emph{equilibrium landscape} rather than the preferred destination. This is the
natural counterpart of the thermodynamic interpretation developed in
Section~\ref{sec:model}: external energetic penalties modify the Gibbs measure,
and the least-action path responds through the same information-geometric
projection structure.

%% file: 8-discussion-conclusion.tex
\section{Conclusion and Future Work}
\label{sec:discussion}

This work develops a discrete least-action description of motion on the
probability simplex by combining three structures that are usually treated
separately: thermodynamic free energy, information-geometric kinetic cost, and
information projection. With the Gibbs distribution \(q_0\) as the equilibrium
reference, the quantity \(D_{\mathrm{KL}}(p\|q_0)\) is the dimensionless
nonequilibrium free-energy excess of a distribution \(p\), while
\(D_{\mathrm{KL}}(q_t\|q_{t-1})\) measures the informational cost of a
transition. The orientation of these divergences is central to the
construction because it allows the Pythagorean geometry of relative entropy to
connect the free-energy gain of a step directly to the kinetic energy expended
in that step.

The resulting picture has a simple physical interpretation. The Gibbs state
is the zero-free-energy reference, whereas a prescribed distribution
\(p_{\mathrm{pref}}\) represents a nonequilibrium state carrying excess free
energy \(k_B \tau D_{\mathrm{KL}}(p_{\mathrm{pref}}\|q_0)\). A path from
equilibrium to this state can be regarded as a sequence of elementary
distributional transitions through which this nonequilibrium structure is
created. For the information-projection path, the Pythagorean inequality
guarantees at every step that the increase in relative free energy is at least
as large as the corresponding KL kinetic expenditure. Summed along the path,
this gives the energy-conservation-like relation
\begin{equation}
    \sum_t K_t \leq D_{\mathrm{KL}}(p_{\mathrm{pref}}\|q_0),
\end{equation}
which is an exact accounting relation induced by the
information geometry of the simplex rather than a microscopic mechanical conservation law. The aim is to provide a sense in which the kinetic expenditure of the path is supported by the free energy accumulated in the final nonequilibrium state.

The same construction provides a variational notion of action. At each step,
the constrained free-energy maximization can be written as minimization of a
kinetic-minus-free-energy Lagrangian. This gives the local optimization the
same algebraic structure as the Lagrangian of classical mechanics, while the
probability simplex and KL divergence replace the configuration space and
quadratic kinetic energy of the mechanical setting. 
The corresponding optimality condition leads to a Lambert-\(W\) update, which characterizes each information projection explicitly.

It is important to distinguish this local least-action construction from the
globally minimum-hitting-time problem. The latter asks for the smallest number
of KL-constrained transitions connecting \(q_0\) to
\(p_{\mathrm{pref}}\), whereas the information-projection policy makes
maximal local free-energy progress subject to the kinetic budget. We have not
assumed that these two problems are globally equivalent on an unrestricted
simplex. Instead, the projection construction provides a canonical feasible
path with explicit energetic and finite-time guarantees. In particular, the
number of full kinetic-budget projections is bounded by the ratio between the
endpoint free-energy separation and the per-step kinetic budget. Establishing
conditions under which the local rule is also globally minimum-time optimal
would sharpen the connection to dynamic programming and remains an important
direction for future work.

The extension to state-dependent geodesic costs reveals an additional physical
consequence of the framework. An expected cost assigned to intermediate
distributions does not require a different projection geometry; it can be
absorbed into a deformation of the Gibbs reference. Physically, the added cost
acts as an external contribution to the energy landscape, suppressing
high-cost states in the effective equilibrium distribution and thereby
reshaping the least-action path. This gives application-specific constraints and/or penalties
a direct thermodynamic interpretation. More broadly, it shows that the reference measure and the cost landscape are not independent objects: for linear expected costs, changing one is equivalent to exponentially tilting the other.

This observation also suggests a useful distinction between two kinds of
extensions. If the added cost represents a genuine persistent contribution to
the physical energy, then the tilted Gibbs distribution is the appropriate
equilibrium reference for the modified system. If instead the cost is only a
transient routing, risk, or control penalty, the tilt should be interpreted as
a local variational device while the original Gibbs state remains the physical
endpoint. Understanding how these two interpretations interact with more
general nonlinear or path-dependent costs is another natural direction for
extending the framework.

Several further questions follow from the present formulation. A
continuous-time limit could connect the discrete KL action developed here to
Fisher--Rao geometry, thermodynamic length, and continuous variational
principles \cite{ito2023geometric}. The backward information-projection recursion may also admit
additional structure under restricted statistical families, where global
minimum-time optimality or simpler composition laws could emerge. Beyond the
linear state costs considered here, nonlinear functionals of the variational
distribution may lead to richer effective energy landscapes while preserving
parts of the same geometric construction. Finally, numerical studies can
clarify how the Lambert-\(W\) path compares with alternative admissible paths in
kinetic expenditure, step count, and sensitivity to the local budget
\(\delta\).

Overall, the framework identifies a direct bridge between thermodynamics and
information geometry: distance from Gibbs equilibrium is free energy, motion
between distributions carries an informational kinetic cost, and information
projection supplies the variational mechanism relating the two. The resulting
least-action path provides not only a geometric rule for moving between
distributions but also a physical account of the resources required to create
and navigate nonequilibrium statistical structure.

%% file: 9-proof_solution.tex
\section{Proof of Theorem~\ref{thm:lambert_projection}}
\label{sec:proof_thm_lambert}

When \(D_{\mathrm{KL}}(q_t\|q_0)\leq\delta\), the Gibbs distribution itself is
feasible in \eqref{eq:greedy_projection_problem-constraint}; since
\(D_{\mathrm{KL}}(q\|q_0)\geq0\), with equality only at \(q=q_0\), it is the
unique optimizer.

Now suppose \(D_{\mathrm{KL}}(q_t\|q_0)>\delta\).  The constraint must then be
active.  Dropping terms independent of \(q\), the least-action problem is
\[
\min_{q\in\Delta(\mathcal X)}
\left\{
D_{\mathrm{KL}}(q\|q_0)
+
\lambda_t D_{\mathrm{KL}}(q_t\|q)
\right\}.
\]
Introduce a multiplier \(\nu_t\) for \(\sum_x q(x)=1\).  The stationarity
condition is
\begin{equation}
\log\frac{q(x)}{q_0(x)}
+1+\nu_t
-
\lambda_t\frac{q_t(x)}{q(x)}
=0.
\label{eq:lambert_stationarity}
\end{equation}
Let
\[
z_x
=
\lambda_t\frac{q_t(x)}{q(x)}.
\]
Then \(q(x)=\lambda_t q_t(x)/z_x\), and
\eqref{eq:lambert_stationarity} yields
\begin{equation}
z_x e^{z_x}
=
\alpha_t\frac{q_t(x)}{q_0(x)},
\qquad
\alpha_t
\triangleq
\lambda_t e^{1+\nu_t}.
\end{equation}
Hence
\[
z_x
=
W_0\!\left(
\alpha_t\frac{q_t(x)}{q_0(x)}
\right),
\]
and therefore
\begin{equation}
q(x)
=
\frac{
\lambda_t q_t(x)
}{
W_0\!\left(
\alpha_t\frac{q_t(x)}{q_0(x)}
\right)
}.
\end{equation}
Normalization eliminates \(\lambda_t\) and gives
\eqref{eq:lambert_solution}.  The scalar \(\alpha_t\) is determined by the
active constraint \eqref{eq:lambert_constraint}.  Strict convexity of
\(D_{\mathrm{KL}}(q\|q_0)\), together with convexity of
\(D_{\mathrm{KL}}(q_t\|q)\) in \(q\), guarantees uniqueness.